\begin{document}

\title{Modeling Intratumor Gene Copy Number Heterogeneity using  Fluorescence in Situ Hybridization data}

\titlerunning{Modeling Intratumor Gene Copy Number Heterogeneity using FISH data}  % abbreviated title (for running head)
%                                     also used for the TOC unless
%                                     \toctitle is used
%
\author{Charalampos E. Tsourakakis\inst{1}}
\authorrunning{Charalampos E. Tsourakakis}   % abbreviated author list (for running head)
\institute{ 
Carnegie Mellon University, USA \\
\email{ctsourak@math.cmu.edu} 
}

%\date{28 Oct. 2011} 

\maketitle

\newcommand{\Prob}[1]{{{\bf{Pr}}\left[{#1}\right]}}
\newcommand{\reminder}[1]{{\textsf{\textcolor{red}{[#1]}}}}
\newcommand{\QED}{ \hfill {$\Box$}}
\newcommand{\vectornorm}[1]{\left|\left|#1\right|\right|}
\newcommand{\field}[1]{\mathbb{#1}} % requires amsfonts
\renewcommand{\vec}[1]{{\mbox{\boldmath$#1$}}}
\newcommand{\Abs}[1]{{\left|{#1}\right|}}
\newcommand{\Lone}[1]{{\left\|{#1}\right\|_{L^1}}}
\newcommand{\Linf}[1]{{\left\|{#1}\right\|_\infty}}
\newcommand{\Norm}[1]{{\left\|{#1}\right\|}}
\newcommand{\Mean}[1]{{\mathbb E}\left[{#1}\right]}
\newcommand{\Var}[1]{{\mathbb Var}\left[{#1}\right]}
\newcommand{\Floor}[1]{{\left\lfloor{#1}\right\rfloor}}
\newcommand{\Ceil}[1]{{\left\lceil{#1}\right\rceil}}

\begin{abstract}\footnote{Topic: Cancer Genomics. Keywords: intra-tumor heterogeneity, evolutionary dynamics, cancer phylogenetics, Markov chains, simulation, FISH}
Tumorigenesis is an evolutionary process which involves a significant number
of genomic rearrangements typically coupled with changes in the gene copy number profiles
of numerous cells.  
Fluorescence {\it in situ} hybridization (FISH) is a  cytogenetic technique which 
allows counting copy numbers of genes in single cells. The study of cancer
progression using FISH data has received considerably less attention compared to other 
types of cancer datasets. 

In this work we focus on inferring likely tumor progression pathways using publicly
available FISH data. %\footnote{\url{ftp://ftp.ncbi.nlm.nih.gov/pub/FISHtrees/data/}} 
We model the evolutionary process as a Markov chain in the positive integer
cone $\field{Z}_+^g$ where $g$ is the number of genes examined with FISH. 
Compared to existing work which oversimplifies reality by assuming
independence of copy number changes \cite{pennington,pennington2}, our model is able to capture dependencies.
We model the probability distribution of a dataset with hierarchical log-linear models, 
a popular probabilistic model of count data. Our choice provides an attractive trade-off 
between parsimony and good data fit.
We prove a theorem of independent interest which provides necessary and sufficient conditions for reconstructing 
oncogenetic trees \cite{desper}. Using this theorem we are able to capitalize on the 
wealth of inter-tumor phylogenetic methods. We show how to produce tumor phylogenetic trees 
which capture the dynamics of cancer progression. % under the assumption that a single mutation event to cause gain or loss of a single gene copy. 
We validate our proposed method on a breast tumor dataset. %where we obtain significant findings which we analyze in detail.  
 
\end{abstract}

\section{Introduction}
\label{sec:intro}
Tumors are heterogeneous masses which exhibit cellular and genomic differences \cite{heselmeyer,marusyk,navin2,navin3}.
Cell-by-cell assay measurements allow us to study the phenomenon of tumor heterogeneity. 
Fluoresence {\it in situ} hybridization (FISH) is a cytogenetic 
technique which allows us to study {\em gene copy number heterogeneity} within a single tumor. 
It is used to count the copy number of DNA probes for specific genes or chromosomal regions. 
Understanding how tumor heterogeneity  progresses is a major problem
with significant potential impact on therapeutics. 
An example which illustrates the importance of heterogeneity in
therapeutic resistance is found in chronic myelogenous leukemia (CML).
The presence of a subpopulation of leukemic cells of a given type 
significantly influences the response to therapies based on imatinib mesylate,
causing the eventual relapse of the disease \cite{marusyk}.

\begin{figure*}[htp]
    \centering 
    \includegraphics[width=0.4\textwidth]{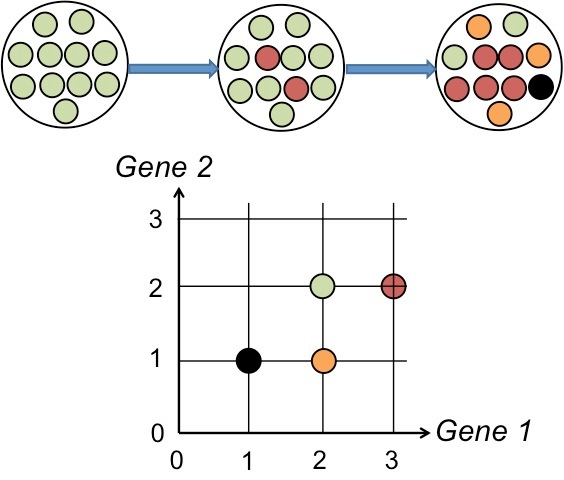}
    \caption{Copy number genetic diversity within a tumor cell population after three hypothetical stages of tumorigenesis.
	The four observed states of cells are shown and coded with colors in the positive integer cone $\field{Z}_+^2$.
	%For instance, the healthy diploid state is denoted with green.
	 } 
     \label{fig:fig1}
\end{figure*}

In this work we study the phenomenon of {\em gene copy number heterogeneity} within a single tumor. 
An illustration of the phenomenon of gene copy number heterogeneity is shown in Figure~\ref{fig:fig1}
which shows a hypothetical cell population of eleven somatic human cells at three stages of tumorigenesis,
whose succession is indicated by arrows. 
We shall describe the state of a cell as a two dimensional point $(g_1,g_2)$ in the positive integer cone $\field{Z}_+^2$,
with the obvious meaning that the cell has $g_1,g_2$ copies of gene 1 and gene 2 respectively.  
Initially, all cells are in the healthy diploid state with respect to their gene copy numbers.
At the second stage, nine cells are in state (2,2) and two cells in state (3,2). 
At the last stage, only 2 cells are in the healthy state (2,2). 
Five, three and one cell are in states (3,2), (2,1) and (1,1) respectively. 
The four observed states are shown and coded with colors in $\field{Z}_+^2$.
Figure~\ref{fig:fig1} illustrates what we observe in many tumors: existence of multiple progression states within a single tumor.

Despite the large amount of research work on modeling tumor progression using different types of tumor datasets, e.g., 
\cite{hainke}, the study of FISH datasets has received considerably less attention. 
Specifically, two early studies of FISH datasets were limited to either two \cite{pennington,pennington2} or 
three probes \cite{martins2012evolutionary}.
Pennington et al. \cite{pennington,pennington2} develop novel computational methods for analyzing FISH 
data. Specifically, they consider a random walk on the positive 
integer cone $\field{Z}_+^2$ where at each step a coordinate $i \in \{1,2\}$ is picked uniformly 
at random and is modified by $\Delta x \in \{0,1,-1\}$
with probabilities $1-p_{i,1}-p_{i,-1}, p_{i,1}, p_{i,-1}$, $i=1,2$ respectively.
Given this model they optimize a likelihood-based objective over all possible trees
and parameters  $\{p_{i,1}, p_{i,-1}\}$. 
Recently, Chowdhury et al. \cite{ismb13} proposed   a general procedure which can treat any number of probes. 
They reduce the study of the progression of FISH probe cell count patterns to the rectilinear minimum spanning tree problem. 

\textit{Paper contributions and roadmap.} In this paper we achieve the following contributions:

\begin{itemize}

 \item We introduce a novel approach to analyzing FISH datasets. The main features of our approach 
 are its probabilistic nature which provides an attractive trade-off between parsimony and 
 expressiveness of biological complexity and the reduction of the problem to the well-studied inter-tumor phylogeny inference problem. 
 The former allows us to capture complex dependencies between factors while the latter 
 opens the door to a wealth of available and established theoretical methods which exist for the inter-tumor phylogeny inference problem.

 \item  We prove Theorem~\ref{thrm:thrm1}  which provides necessary and sufficient conditions for the unique reconstruction of an oncogenetic tree \cite{desper}. 
 Based on the theorem's conditions, we are able to capitalize  on the wealth of inter-tumor phylogenetic methods. 
 However, the result is of independent interest and introduces a set of interesting combinatorial questions. 
 
 \item We validate our proposed method on a publicly available breast cancer dataset.  
 
\end{itemize}

The outline of this paper is as follows: Section~\ref{sec:methods} presents our proposed
methods. Section~\ref{sec:experiments} performs an experimental evaluation of our methods on 
a breast cancer FISH dataset and an extensive biological analysis of the findings. 
Finally, Section~\ref{sec:concl} concludes the paper by a discussion and a brief summary.

\section{Proposed Method}
\label{sec:methods}
In Section~\ref{subsec:probabilisticmodel} we model the probability distribution 
of FISH data with hierarchical log-linear models and show how to learn 
the parameters of the model for a given FISH dataset. 
In Section~\ref{subsec:oncotrees} we prove  Theorem~\ref{thrm:thrm1} 
which provides necessary and sufficient conditions to uniquely reconstruct 
an oncogenetic tree \cite{desper}. 
We capitalize on the theorem to harness the wealth of available 
methods for inter-tumor phylogenetic inference 
methods \cite{beer4,beer5,beer6,desper,desper2,hainke,beer3,heydebreck}.
Finally in Section~\ref{subsec:simulation} we present our proposed method. 

We will make the same simplifying assumptions with existing work \cite{pennington,ismb13}, namely that only single gene duplication and loss events take place
and that the cell population is fixed. 
In what follows, let $\mathcal{D}=\{x_1,\ldots,x_n\}, x_i \in \field{Z}_+^g$ be the input FISH dataset which measures the copy numbers of $g$ genes 
in $n$ cells taken from the same tumor.

\subsection{Model and Fitting}
\label{subsec:probabilisticmodel} 

\paragraph{ Probabilistic Model:} Let $X_j$ be an integer-valued random variable which expresses
the copies of the $j$-th gene with domain $I_j$, $j=1,\ldots,g$. 
We model the joint probability distribution of the random vector $(X_1,\ldots,X_g)$ as 

\begin{equation} 
\label{eq:modeleq}
\Prob{x} = \frac{1}{Z} \prod_{A \subseteq [g]} e^{\phi_A(x)}
\end{equation}

\noindent where $x=(x_1,\ldots,x_g) \in I=I_1 \times I_2 \times ... \times I_g$ 
is a point of the integer positive cone $\field{Z}_+^g$ and 
$Z$ is a normalizing constant, also known as the partition function, which ensures that the distribution is a proper probability distribution, i.e.,
$ Z = \sum_{x \in I} \prod_{A \subseteq [g]} e^{\phi_A(x)}$.
Each potential function $\phi_A$ depends only on the variables in the subset $A$ and is parameterized by a set of weights $w_A$. 
To illustrate this, assume $g=2$ and $I=\{0,1\} \times \{0,1\}$.
Then, 

\begin{align*}
\log \Prob{x} &= w_0 + w_{(1)0} \mathbbm{1}\{x_1=0 \} + w_{(1)1} \mathbbm{1}\{x_1=1\} +w_{(2)0} \mathbbm{1}\{x_2=0\} \\
	      &+ w_{(2)1} \mathbbm{1}\{x_2=1\}+  w_{(12)00} \mathbbm{1}\{x_1=0,x_2=0\} + w_{(12)01} \mathbbm{1}\{x_1=0,x_2=1\}  \\ 
	      &+ w_{(12)10} \mathbbm{1}\{x_1=1,x_2=0\} + w_{(12)11} \mathbbm{1}\{x_1=1,x_2=1\} - \log Z,
\end{align*}

\noindent where $w_{Ax}$ $A \subseteq \{1,2\}, x \in \{0,1\}^{|A|}$ are the parameters of the model. 
This probability distribution captures the effects of different factors through 
parameters $w_{Ax}, A \in \{ \{1\}, \{2\} \}, x \in \{0,1\}$ and pairwise interactions through parameters 
$w_{(12)x}, x \in \{00,01,10,11\}$. 
In general, two variables $X_i,X_j$ are defined to be directly associated 
if there exists at least one non-zero (or bounded away significantly from zero) 
parameter including the two variables. 
We define $X_i,X_j$ to be indirectly associated if there exists a chain of overlapping direct 
associations that relate $X_i,X_j$. 

\noindent We impose the following restriction on the probabilistic model shown in equation~\eqref{eq:modeleq}:
If $A \subseteq B$ and $w_A=0$ then $w_B=0$.  
This restriction reduces significantly the size of the parameter space, 
but allows to express complex dependencies not captured by existing work \cite{pennington,pennington2}. 
Since a typical FISH dataset contains detailed measurements  for a handful of genes from few hundred cells
the combination of these two features is crucial to avoid overfitting and 
obtain biological insights at the same time. 
Furthermore, it is worth emphasizing that in terms of biological interpretation the assumption is natural:
if a set $A$ of genes does not interact, then any superset of $A$ maintains that property. 
This class of models are known as hierararchical log-linear models \cite{bishop}.

\paragraph{ Learning the Parameters:} 

Learning the parameters $w$ of a hierarchical log-linear model is a well-studied problem, e.g., \cite{bishop}. 
An extensive survey of learning methods can be found in \cite{thesischmidt}. 
Schmidt et al. \cite{schmidt2} propose to maximize a penalized log-likelihood of the dataset $\mathcal{D}$ 
where the penalty is an overlapping group $l_1$-regularization term. 
Specifically, a spectral projected gradient method is proposed as a sub-routine for solving the following regularization problem:

\begin{equation} 
\label{eq:schmidt} 
\min_w -\sum_{i=1}^n \log \Prob{x_i|w} + \sum_{A \subseteq S} \lambda_A ||w_A||_1.
\end{equation} 

\subsection{Unique Reconstruction of Oncogenetic Trees}
\label{subsec:oncotrees}

Our main theoretical result in this section is motivated by the following natural sequence of questions: 

{\em Can we use any of the existing inter-tumor progression methods \cite{hainke} on the intra-tumor progression problem? 
How will the resulting tree capture the evolutionary dynamics of cancer progression, i.e., how 
do we enforce that state  $(\ldots,i,\ldots)$ is reached either through $(\ldots,i+1,\ldots)$ or $(\ldots,i-1,\ldots)$ 
given our single gene duplication and loss event assumption?}

An answer to this question is given in Section~\ref{subsec:simulation}.
Motivated by our intention to capitalize on inter-tumor phylogenetic methods such as \cite{desper,desper2},
we consider a fundamental problem concerning oncogenetic trees \cite{desper}. What are the necessary and sufficient
conditions to reconstruct them?  
Theorem~\ref{thrm:thrm1} is likely to be of independent interest and contributes 
to the understanding of oncogenetic trees \cite{desper}. We briefly review the necessary definitions to state our result. 
Let $T=(V,E,r)$ be an oncogenetic tree, i.e., a rooted branching\footnote{Each vertex has in-degree at most one and 
there are no cycles.}, on $V$ and let $r \in V$ be the root of $T$.
Given a finite family $\mathcal{F}=\{A_1,...A_q\}$ of sets of vertices, i.e., $A_i \subseteq V(T)$ for $i=1,\ldots,q$,
where each $A_i$ is the vertex set of a rooted sub-branching of $T$, what are the necessary and sufficient conditions, if any,  
which allow us to uniquely reconstruct $T$?

\begin{theorem}
\label{thrm:thrm1}
The necessary and sufficient conditions to uniquely reconstruct the branching $T$ from the family $\mathcal{F}$ are
the following:
\begin{enumerate}
		\item For any two distinct vertices $x,y \in V(T)$ such that $(x,y) \in E(T)$, 
                there exists a set $A_i \in \mathcal{F}$ such that $x \in A_i$ and $y \notin A_i$.
	        \item For any two distinct vertices $x,y \in V(T)$ such that $y \nprec x$ and $x \nprec y$\footnote{We use the notation $u \prec  v$ ($u \nprec v$)
                to denote that $u$ is (not) a descendant of $v$ in $T$.}
	        there exist sets $A_i,A_j \in \mathcal{F}$ such that $x \in A_i$, $y \notin A_i$ and $x \notin A_j$ and $y \in A_j$.
\end{enumerate} 
\end{theorem}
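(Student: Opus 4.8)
\medskip
\noindent\emph{Proof strategy.} The plan is to reduce the statement to a question about the ancestor partial order of $T$. A finite branching is recovered from this order --- the parent of a vertex is its nearest strict ancestor, and $r$ is the unique vertex with no strict ancestor --- so reconstructing $T$ amounts to recovering, for all $x,y\in V(T)$, whether $x$ is an ancestor of $y$. Since each $A_i$ is the vertex set of a rooted sub-branching of $T$ and hence ancestor-closed (if $v\in A_i$ and $v\prec w$ then $w\in A_i$), the natural candidate rule is
\[
x\ \text{is an ancestor of, or equal to,}\ y\quad\Longleftrightarrow\quad\text{every }A_i\in\mathcal{F}\text{ with }y\in A_i\text{ also satisfies }x\in A_i ,
\]
and the theorem becomes the assertion that conditions (1)--(2) hold exactly when this rule returns the ancestor order of $T$.

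For \emph{sufficiency}, assume (1)--(2). The implication ``$\Rightarrow$'' in the displayed rule is immediate from the ancestor-closedness of each $A_i$. For ``$\Leftarrow$'' I would argue the contrapositive: if $x$ is not an ancestor of $y$ and $x\neq y$, then either $x\prec y$ or $x$ and $y$ are incomparable. When $x\prec y$, take the first edge $(y,w)$ on the path from $y$ down to $x$; condition (1) gives an $A_i$ with $y\in A_i$ and $w\notin A_i$, and then $x\notin A_i$, because $w$ lies on the path from $r$ to $x$ and $A_i$ is ancestor-closed, so the right-hand side of the rule fails. When $x$ and $y$ are incomparable, the clause of (2) producing $A_j$ with $y\in A_j$, $x\notin A_j$ makes the right-hand side fail. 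Hence the rule reproduces the ancestor order of $T$, and $T$ is reconstructed.

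For \emph{necessity}, I would show that a failed condition leaves another branching on $V(T)$ equally compatible with $\mathcal{F}$. If (1) fails for an edge $(x,y)\in E(T)$, every $A_i$ containing $x$ also contains $y$; let $T'$ be $T$ with that edge reversed --- splice $y$ in immediately above $x$, or promote $y$ to the root with $r$ as its child when $x=r$. A short check of which ancestor relations change shows the only ones $T'$ gains over $T$ are ``$y$ is an ancestor of $x$ and of the former descendants of $x$ lying outside the subtree rooted at $y$'', so every $A_i$ stays ancestor-closed in $T'$ precisely under the failed hypothesis; thus $\mathcal{F}$ also fits $T'\neq T$. If (2) fails --- say no $A_i$ has $x\in A_i$, $y\notin A_i$ for an incomparable pair $x,y$, the symmetric clause being analogous --- then every $A_i$ containing $x$ contains $y$; let $T'$ be $T$ with the subtree rooted at $x$ detached and $x$ re-attached as a child of $y$. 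The new ancestors of the vertices of that subtree are $y$ and the ancestors of $y$, all of which lie in any $A_i$ containing $x$, so again $\mathcal{F}$ fits a branching other than $T$.

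I expect the \emph{necessity} direction to be the main obstacle: one must produce a genuinely different branching and then verify that \emph{every} member of $\mathcal{F}$ survives as a rooted sub-branching of it, which requires a careful accounting of exactly which ancestor relations an edge reversal or a subtree move gains and loses, plus a separate treatment of the cases touching $r$. One also has to be precise about what ``uniquely reconstruct'' should mean: a vertex occurring in no $A_i$ can always be re-hung freely, so one implicitly works with $V(T)=\bigcup_i A_i$ and reads reconstruction as recovery by the displayed rule. Sufficiency, by contrast, is essentially the clean remark that under (1)--(2) the ancestor order of $T$ is the implication order among the indicator vectors of the sets in $\mathcal{F}$.
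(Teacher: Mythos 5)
Your proposal is correct and follows essentially the same route as the paper: your sufficiency argument is the paper's observation that the intersection of all sets containing a vertex equals its root-to-vertex path (your implication-order rule), proved by the same three-case analysis using conditions 1 and 2, and your necessity constructions (splicing $y$ above $x$ for a failed condition 1, re-hanging the subtree rooted at $x$ under $y$ for a failed condition 2) are explicit versions of the two alternative consistent branchings the paper exhibits in its figures. Your added remark that one should work with $V(T)=\bigcup_i A_i$, since a vertex appearing in no $A_i$ could be re-hung freely, is a sensible precision that the paper leaves implicit.
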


\begin{proof} 
First we prove the necessity of conditions 1,2 and then their sufficiency to reconstruct $T$. 
In the following we shall call a branching $T$ {\it consistent} with the family set $\mathcal{F}$ if 
all sets $A_i \in \mathcal{F}$ are vertices of rooted sub-branchings of $T$. 

\begin{figure*}
\centering
%\rowcolors{1}{green}{pink}
\label{tab:tab1} 
  \begin{tabular}{c|c}  
   \includegraphics[width=0.4\textwidth]{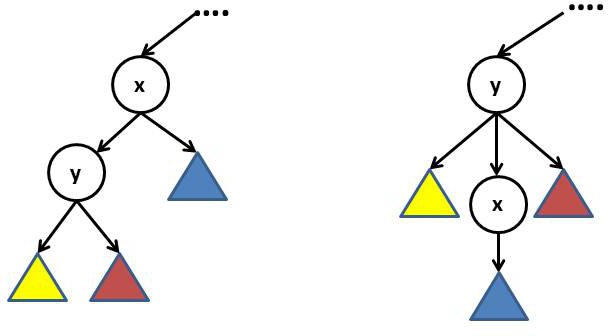} & \includegraphics[width=0.4\textwidth]{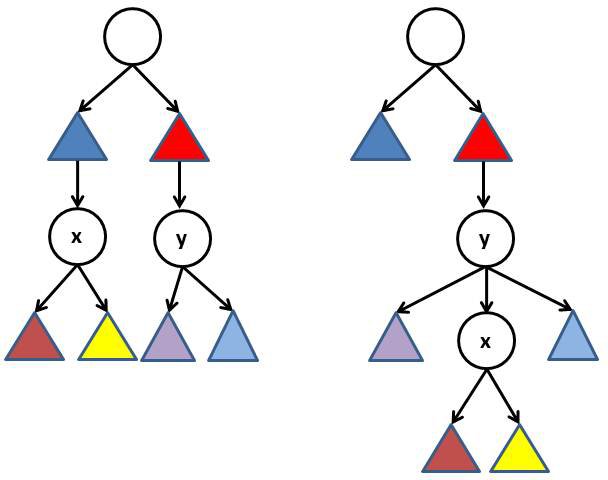}  \\  
    (a)   &  (b)  \\  
  \end{tabular}
  \caption{Illustration of necessity conditions of Theorem~\ref{thrm:thrm1}, (a) condition 1 (b) condition 2.}
\end{figure*}

\underline{Necessity:} For the sake of contradiction, assume that Condition 1 does not hold. Then, the two branchings 
shown in Figure 2(a) are both consistent with $\mathcal{F}$ and therefore we cannot reconstruct $T$. 
Similarly, assume that Condition 2 does not hold. Specifically assume that 
for all $j$ such that $x \in A_j$, then $y \in A_j$ too (for the symmetric case
the same argument holds). Then, both branchings in Figure 2(b) are consistent with $\mathcal{F}$ 
and therefore $T$ is not reconstructable. 

\underline{Sufficiency:} Let $x \in V(T)$  and $P_x$ be the path from the root to $x$, i.e., $P_x=\{r,\ldots,x\}$. 
Also, define $F_x$ to be the intersection of all sets in the family $\mathcal{F}$ that contain vertex $x$,
i.e., $F_x =\underset{\mbox{ $x \in A_i \in \mathcal{F}$ }}{ \bigcap A_i }$. 
We prove that $F_x = P_x$. Since by definition, $P_x \subseteq F_x$ we need to show that $F_x \subseteq P_x$.
Assume that the latter does not hold. 
Then, there exists a vertex $v \in V(T)$ such that $v \notin P_x, v \in F_x$. 
We consider the following three cases. 
 
\noindent 
\underline{$\bullet$ {\sc Case 1} ($x \prec v$):} Since every $A_i \in \mathcal{F}$ is the vertex set of a rooted sub-branching,  $v \in P_x$
by definition.

\noindent 
\underline{$\bullet$ {\sc Case 2} ($v \prec x$):} By condition 1 and an easy inductive argument, 
there exists $A_i$ such that $x \in A_i, v \notin A_i$. Therefore, $v \notin F_x$. 

\noindent 
\underline{$\bullet$ {\sc Case 3} ($x \nprec v, v \nprec x$):}
By condition 2, there exists $A_i$ such that  $x \in A_i$ and $v \notin A_i$.
Therefore, in combination with the definition of $F_x$ we obtain $v \notin F_x$. 

In all three cases, we obtain a contradiction and therefore $v \in F_x \Rightarrow v \in P_x$,
showing that $F_x = P_x$. 
Given this fact, it is easy to reconstruct the branching $T$. We sketch the algorithm: compute for each $x$ the set 
$F_x$ and from $F_x$ reconstruct the ordered version of the path $P_x$, i.e.,
$(r \rightarrow v_1 \rightarrow .. \rightarrow x)$ using sets in $\mathcal{F}$ whose 
existence is guaranteed by condition 1.  $\qed$
\end{proof}

\begin{figure*}
\centering
%\rowcolors{1}{green}{pink}
\label{tab:tab1} 
  \begin{tabular}{cc}  
   \includegraphics[width=0.45\textwidth]{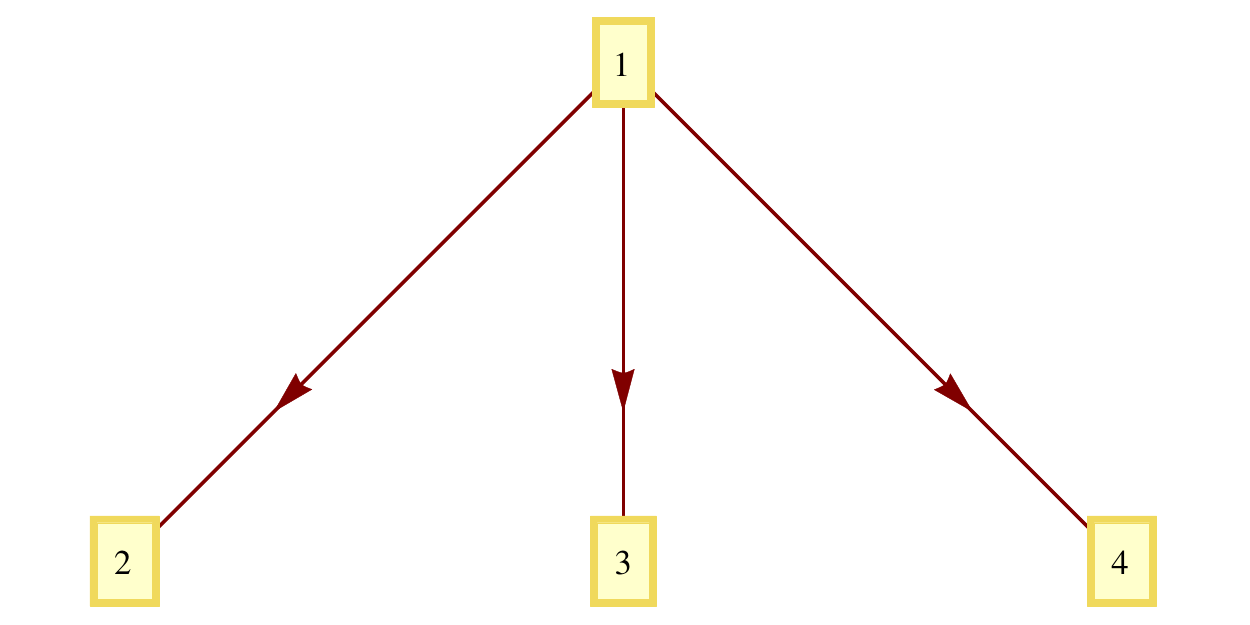} & \includegraphics[width=0.6\textwidth]{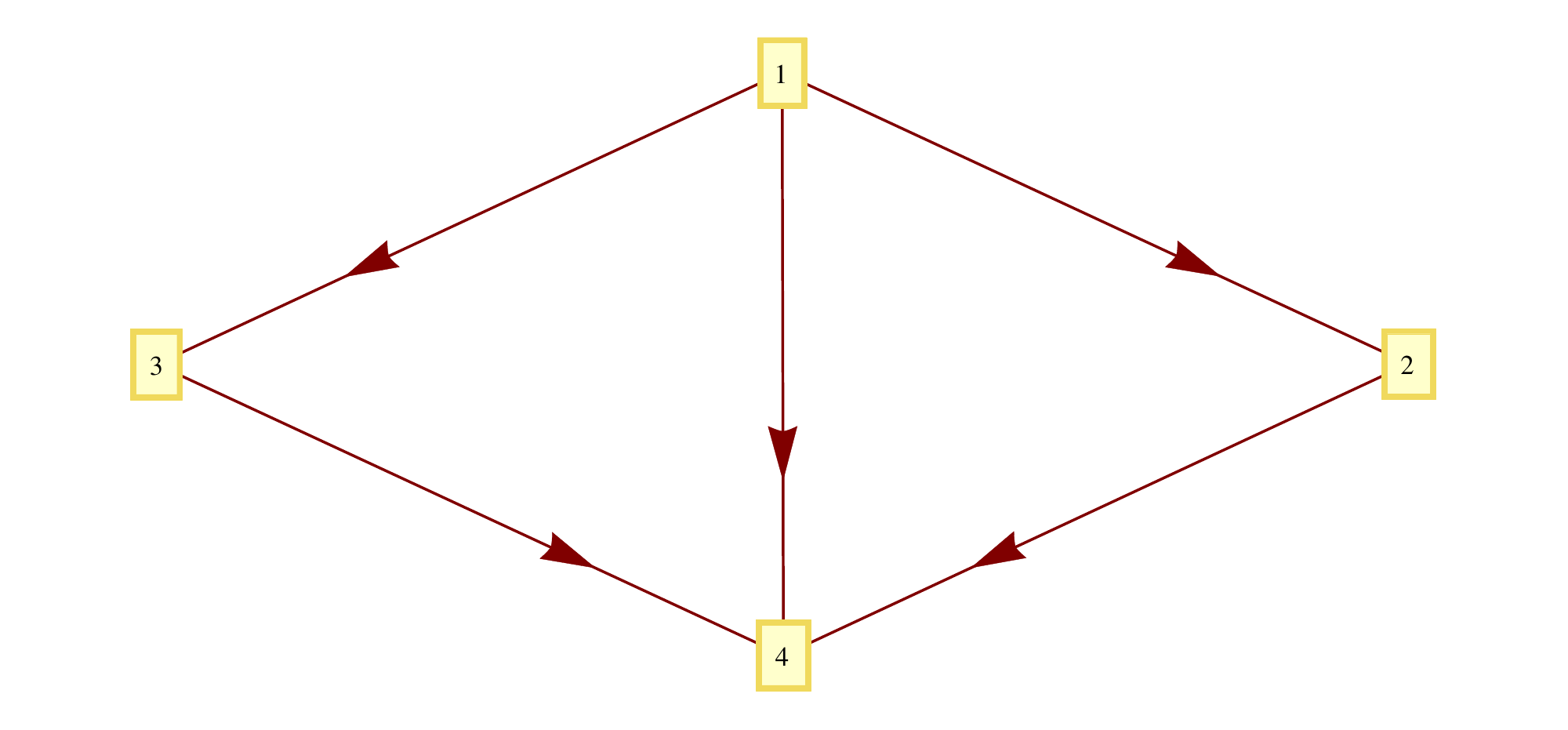}  \\  
    (a)   &  (b)  \\  
  \end{tabular}
  \caption{\label{fig:counterex} Extending Theorem~\ref{thrm:thrm1} to oncogenetic directed acyclic graphs  (DAGs) \cite{hainke} is not possible. 
  For instance, the oncogenetic tree and the DAG are indistinguishable in terms of the generated families of sets.}
\end{figure*}

A natural question is whether one can extend Theorem~\ref{thrm:thrm1} to more complex classes
of oncogenetic models, such as directed acyclic graphs (DAGs) \cite{hainke}. The answer is negative. 
For instance, the oncogenetic tree with edges $1 \rightarrow 2,1 \rightarrow 3$ 
and the DAG with edges $1 \rightarrow 2, 1 \rightarrow 3, 2 \rightarrow 3$ 
are indistinguishable. Another example is shown in Figure~\ref{fig:counterex}. 

Oncogenetic tree reconstruction algorithms \cite{desper}  and more generally established inter-tumor cancer progression methods \cite{hainke} 
receive as input a family of sets, where each set represents the set of mutations observed in a {\it single} tumor. 
Recall, that in the case of intra-tumor cancer progression the typical 
input is a multiset of points in the positive integer cone where each point is the copy-gene number state of a given cell. 
For instance, for the tumor shown in the final step of tumorigenesis in Figure~\ref{fig:fig1},
the input would be $\mathcal{D} = \{ (2,2) \times 2,(3,2) \times 5,(2,1) \times 3,(1,1) \times 1 \}$. 

Based on the insights from the proof of Theorem~\ref{thrm:thrm1}, we 
convert a FISH dataset to a dataset suitable for inter-tumor cancer progression inference.
Specifically, we assume we are given a FISH dataset and an algorithm $f$ which infers an 
evolutionary model of cancer progression from several tumors. Notice that $f$ could be 
any inter-tumor phylogenetic method, see \cite{hainke}. 

For each cell in state $x=(x_1,\ldots,x_g)$  we generate a family of sets of ``mutations'' as follows:
for every gene $i \in [g]$ whose number of copies $x_i$ is greater than  2 we generate 
$x_i-1$ sets in order to enforce that the gene has $c+1$ copies if at a previous stage
had $c$ copies, $c=2,..,x_i-1$. 
For instance if gene $i$ has 4 copies, we generate the sets $\{gene-i-mut-2\}, \{gene-i-mut-2,gene-i-mut-3\},
\{gene-i-mut-2,gene-i-mut-3,gene-i-mut-4\}$ which show how the gene
obtained 2 extra copies 
The case $x_i<2$  is treated in a similar way. 
Finally, we generate a set $s_f$ for each cell which contains all mutations 
that led to state $x$. For instance, for the state $(0,3)$ 
$s_f =\{ gene-1-mut-2, gene-1-mut-1, gene-1-mut-0, gene-2-mut-2, gene-2-mut-3 \}$.
Upon creating the dataset $\mathcal{F}$ we use it as input to $f()$,
an existing intra-tumor phylogenetic method, see \cite{hainke}.

Using the conversion above, based on  condition 1 of Theorem~\ref{thrm:thrm1} the output of 
an inter-tumor phylogenetic method will capture the dynamic nature of the process, 
which will be consistent with our assumptions of single gene duplication and loss events 
and Ockham's razor, e.g.,, the evolutionary sequence $2 \rightarrow 3 \rightarrow 4$
rather than $2 \rightarrow 3 \rightarrow 4 \rightarrow 5 \rightarrow 4$.

\subsection{Progression Inference}
\label{subsec:simulation} 

Define $\mathcal{B} = [\min_{i \in [n]} x_{i1}, \max_{i \in [n]} x_{i1}] \times .. \times [\min_{i \in [n]} x_{ig}, \max_{i \in [n]} x_{ig}]$ to
be the minimum enclosing box of $\mathcal{D}$, where $x_{ij}$ is the number of copies of gene $j$ in the $i$-th cell, $i \in [n], j \in [g]$.  
Given the observed data we can calculate the empirical probability $\tilde{\pi}(s)$ of any state $s \in \mathcal{B}$
as the fraction $\frac{|\{q: q \in \mathcal{D}, q=s|\}|}{n}$. 
The number of states in $\mathcal{B}$ grows exponentially fast for any typical FISH dataset. 
We summarize parsimoniously this distribution as described in Section~\ref{subsec:probabilisticmodel}. 
Specifically, we learn the parameters $w$ of the hierararchical log-linear model by 
maximizing the overlapping $l_1$ penalized log-likelihood of equation~\eqref{eq:schmidt}
as described in \cite{thesischmidt}.
We allow only second-order interactions between factors.
It is worth mentioning that $k$-way interactions, $k \geq 3$ 
can be embedded in the model as well, see Chapter 6 \cite{thesischmidt},
but we prefer not to avoid overfitting. Alternatively we can allow higher order
interactions but then a penalty term for the model complexity (e.g., AIC, BIC)
should be taken into account.
Let $\pi$ be the distribution specified by the learned parameters. 
We define a Metropolis-Hastings chain with stationary distribution $\pi$ \cite{peresbook} . 
Initially, all $n$ cells will be in the diploid state $(2,\ldots,2)$.
Notice that all we need to compute during the execution of the chain are ratios
of the form $\pi(x)/\pi(y)$, which saves us from the computational cost of computing 
the normalization constant $Z$. 
We  simulate the chain $k$ times in order to draw $m\geq 1$ samples from the probability distribution.
Finally we use the conversion described in Section~\ref{subsec:oncotrees} to infer a tumor phylogeny. 

There exists a subtle issue that arises in practice: there exist states of $\mathcal{B}$ which 
are not observed in the dataset $\mathcal{D}$. We surpass this problem by adding one fictitious 
sample to each state $b \in \mathcal{B}$. From a Bayesian point of view this is equivalent to smoothing the data with an 
appropriately chosen Dirichlet prior. 

To summarize, our proposed method consists of the following steps: (1)
Given a FISH dataset $\mathcal{D}$ we learn the parameters of a hierararchical log-linear model with  pairwise potentials. 
(2) Given the learned parameters we can compute the probability distribution on $Z^g$. 
 Let $\pi$ be the resulting distribution. We define a Metropolis-Hastings chain  with stationary distribution $\pi$. 
 Initially  cells are in the healthy diploid state  $ (2,\ldots,2)$. 
(3) Draw $m \geq 1$ samples from the probability distribution by running the Metropolis-Hastings chain simulation $m$ times.
(4) Convert the resulting FISH samples to inter-tumor phylogenetic datasets by following the procedure of Section~\ref{subsec:oncotrees}. 
(5) Use an inter-tumor phylogenetic method  \cite{hainke} to infer a tumor phylogenetic tree.

Finally, an interesting perspective on our modeling which makes a conceptual connection to \cite{ismb13} is the following: 
upon learning the parameters of the hierararchical log-linear model, 
the probability distribution over $\mathcal{B}$ assigns implicitly weights on the edges of the positive integer di-grid $\field{Z}^g$
(each undirected edge of the grid is substituted by two directed edges) according to the Metropolis-Hastings chain.
Therefore, both our method and \cite{ismb13} assign weights to the edges of the positive integer grid.
This perspective opens two natural research directions which we leave open for future research. 
First, instead of simulating the Markov chain, one proceed could find an appropriate subgraph of the weighted di-grid, e.g., 
a maximum weighted branching rooted at the diploid state. 
Secondly, it is natural to ask whether there exists a natural probabilistic interpretation of the method in \cite{ismb13}.

\section{Experimental Results}
\label{sec:experiments}
%\subsection{Experimental Setup} 
%\label{subsec:setup} 
 
\textit{Experimental Setup:} In this paper, we show the results of validating our method on a breast cancer dataset
%\footnote{ The file can be found at \url{http://www.math.cmu.edu/~ctsourak/FISH}} 
from a collection of publicly available FISH datasets which can be found at \url{ftp://ftp.ncbi.nlm.nih.gov/pub/FISHtrees/data/}. 

%The experiments were performed on a single machine, with Intel Xeon CPU at 2.83 GHz, 6144KB cache size and and 50GB of main memory. 
We used the following third-party publicly available code in our implementations: 
hierarchical log-linear fitting code\footnote{Schmidt's code does not scale well to more than 6-7 variables.} 
\cite{thesischmidt}, distance based oncogenetic trees \cite{desper2}, 
FISH progression trees \cite{ismb13} and {\it graphviz} for visualization purposes. 
Our routines are implemented in MATLAB. The number of simulations was set to $m=10$. 
We experimented both with smaller values for parameter $m$ ($m \geq 2$) 
and the choice of log-linear fitting method, see \cite{delaportas},
and we found that our results are robust. 
%This robustness is due to the fact that can be attributed to the fact that the fitted hierarchical log-linear model.

Table 2 provides a short description of the six genes that are analyzed in our trees. 
 %We will refer to genes with lowercase (e.g., {\it cox-2}) and with uppercase to their protein products (e.g., {\it COX-2}). 
The breast tumor dataset consists of 187 points in $\field{Z}_+^6$.
 %follow the convention of \cite{weinberg}
 \begin{table*}
 \centering
 \begin{tabular}{|c|c|c|}  \hline 
  Gene         &  Cytogenetic Band & Description          \\ \hline   
  {\it cox-2}  &  1q25.2-q25.3  & oncogene              \\ \hline 
  {\it myc}    &  8q24        & oncogene               \\ \hline 
  {\it ccnd1}  &  11q13       & oncogene                \\ \hline 
  {\it cdh1}   &  16q22.1     & tumor suppressor  gene     \\ \hline 
  {\it p53}    &  17p13.1     & tumor suppressor  gene      \\ \hline 
  {\it znf217} &  20q13.2      & oncogene                   \\ \hline 
 \end{tabular}
 \label{tab:genedescription}
  \caption{  Genes are shown in the first column and their cytogenetic positions in  the second. 
   The third column describes whether a gene is an oncogene or a tumor suppressor gene. }
\end{table*}

%\subsection{Results and Analysis}

\begin{figure*}
  \includegraphics[width=0.95\textwidth]{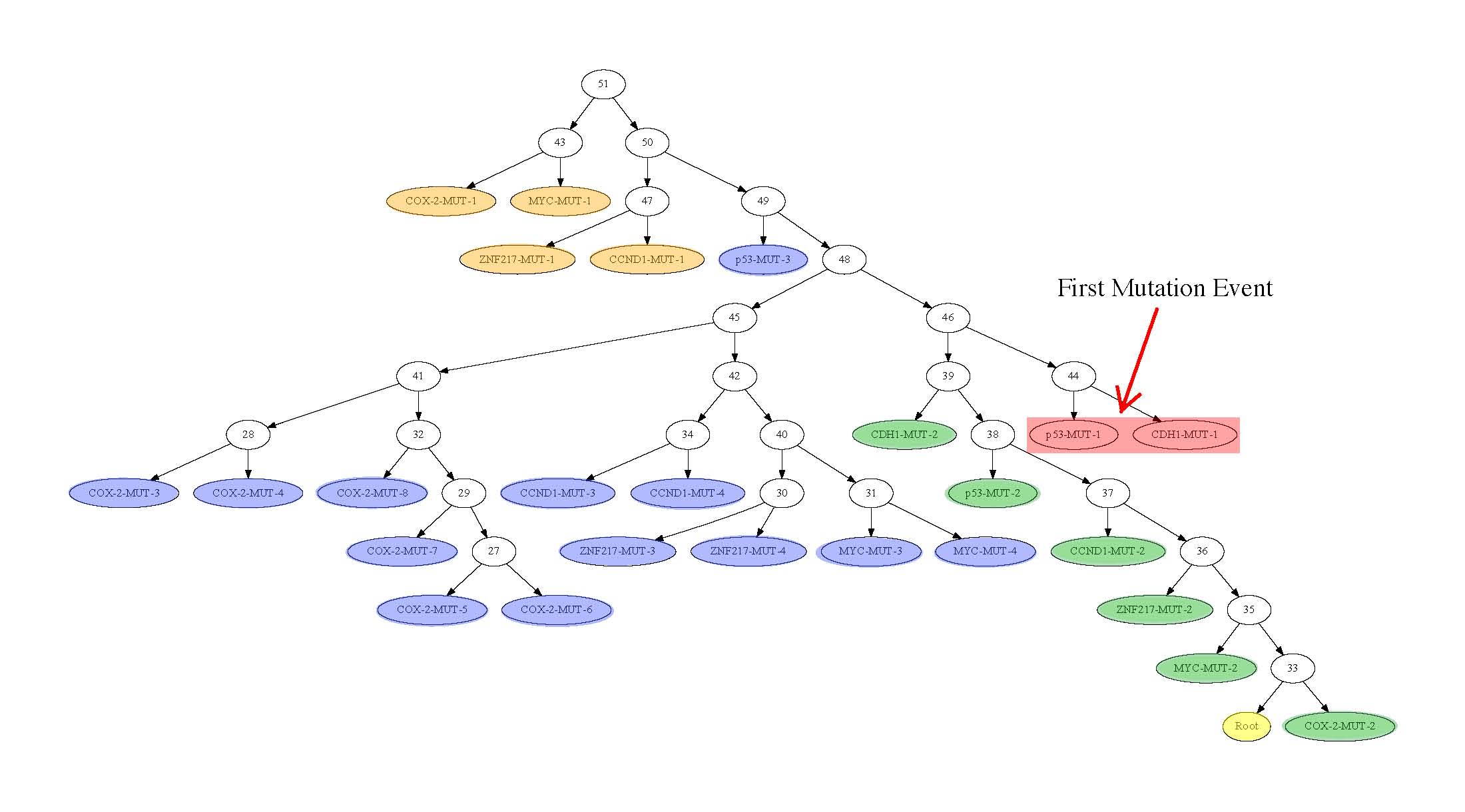}
  \caption{\label{fig:trees2} (best viewed on screen) Cancer phylogenetic tree for a breast cancer tumor obtained by our method. Leaves of the tree are colored in the following way:
  the root-event leaf is in yellow; the first change in the copy number profile is in red color; 
  euploid states are in green; 
  states with copy number gain/loss are shown in blue/orange.  
  The first changes are losses of one gene copy 
  of genes {\it p53} and {\it cdh1}.}
\end{figure*}

\textit{Results and Analysis:} Figures~\ref{fig:trees2} and~\ref{fig:trees3} show the cancer phylogenetic trees of a ductal carcinoma {\it in situ} (DCIS)
obtained by our method and \cite{ismb13} respectively. 
Our tree is a distance based phylogenetic tree produced by using our reduction 
of intra-tumor phylogenetic inference to inter-tumor phylogenetic inference
as described in Section~\ref{subsec:simulation}. 
We observe that the tree of \cite{ismb13} does not explain the 27 different
states that appear in the dataset. For instance
the state $(4, 8, 4, 4, 2, 2, 4)$ accounts for 0.0053\% of the appearing 
states and is discarded by \cite{ismb13} but is taken into account by our method. 
Since there is no ground truth available to us it is hard to reach any indisputable
conclusion. However, we found that our findings are strongly supported 
by oncogenetic literature. 

\begin{figure*}
  \centering 
  \includegraphics[width=0.80\textwidth]{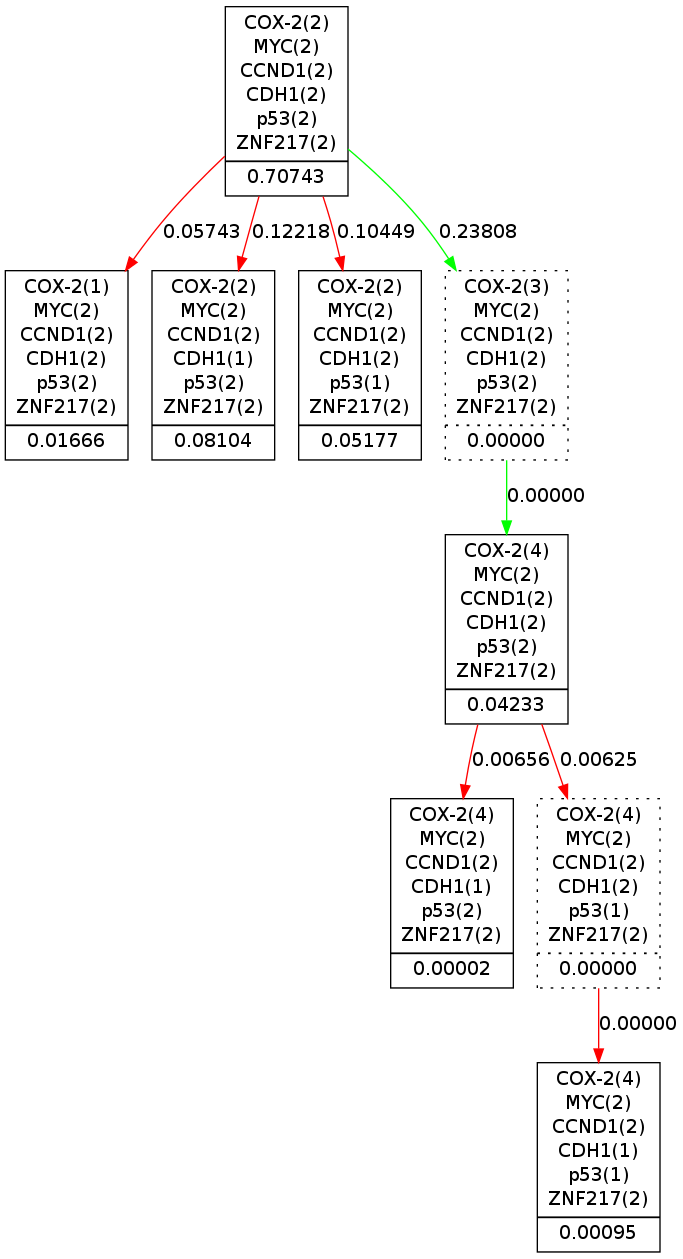}
  \caption{\label{fig:trees3} Cancer phylogenetic tree for a breast cancer tumor obtained by \cite{ismb13}. Nodes with dotted borders represent Steiner 
  nodes, i.e., states that do not appear in the dataset. Green and red edges model gene gain and loss respectively. The weight value on each edge 
  does not have the semantics of probability, but it is the rectilinear distance between the two connected states. See \cite{ismb13} for further details. 
  The weight on each node describes the fraction of cells in the FISH dataset with the particular copy number profile.}
\end{figure*}

The mutational events captured in our phylogenetic tree highlight putative sequential events 
during progression from ductal carcinoma in situ (DCIS) to invasive breast carcinoma. 
The first mutational events are highlighted in red. 
Initially one allele of {\it p53} and {\it cdh1} are lost.
Concurrent loss of {\it cdh1} function  and {\it p53} inactivation act 
synergistically in the formation, progression and metastasis of breast cancer \cite{Derksen06}. 
Moreover, following the first mutational events, the next changes occur in {\it ccnd1}, {\it myc} and {\it znf217}
which are oncogenes participating in cell cycle regulation, proliferation and cancer progression.
Specifically, as shown by single  invasive ductal carcinoma (IDC) cell analysis  copy number loss of {\it cdh1} is common in DCIS.
Furthermore, copy number gains of {\it myc} are a common feature in the transition from DCIS to IDC \cite{heselmeyer}. 
This is consistent with our results. 
The synergy between  {\it p53} and {\it cdh1} appears also in the 
tree of Figure~\ref{fig:trees3} but at the last stages of the progression. 
%Finally the inferred direct associations among genes are $\{ \{ {\it cox-2, cdh1}\}, \{{\it cox-2, p53}\}, \{{\it myc, ccnd1}\}, \{{\it ccnd1, cdh1}\}, \{{\it ccnd1, znf217}\}, \{{\it cdh1, p53}\} \}$.
Finally, Figure~\ref{fig:associations} shows the inferred direct associations among genes.

\section{Conclusion}
\label{sec:concl}
In this work we  develop a novel approach to studying FISH datasets,
a type of dataset which has received considerably less attention 
to other types of cancer datasets. 
Compared to prior work we take a probabilistic approach which 
provides good data fit, avoids overfitting and captures complex
dependencies among factors. 
Motivated by our intention to capitalize on inter-tumor phylogenetic
methods we prove a theorem which provides necessary and sufficient conditions
for reconstructing oncogenetic trees \cite{desper}. 
Using these conditions, we show one way to perform intra-tumor phylogenetic inference 
by opening the door to the wealth of established inter-tumor phylogenetic techniques \cite{hainke}. 
We model the evolutionary dynamics as a Markov chain in the positive integer 
cone $\field{Z}_+^g$ where $g$ is the number of genes examined with FISH. 
Finally, we validate our approach to a breast cancer FISH dataset.

\begin{figure*}
 \centering
  \includegraphics[width=0.35\textwidth]{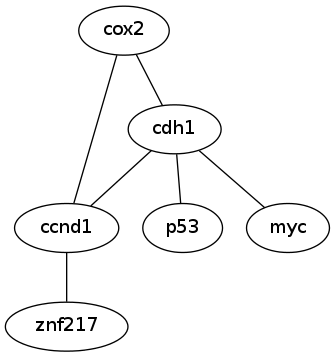}
  \caption{Direct associations among genes inferred from fitting a hierarchical log-linear model to a ductal carcinoma {\it in situ}  FISH dataset. }
  \label{fig:associations}
\end{figure*}

Our work leaves numerous problems open for future research.  
Improved models need to be developed that remove the simplifying assumption of a fixed cell population
and take the clonal evolution model into account \cite{nowell}, namely
cancer is initiated once multiple mutations occur in a random single cell 
which gives birth to the uncontrolled proliferation of cancerous cells. 
Secondly, clustering patients and finding consensus FISH progression trees per cluster 
is another interesting problem. 
Furthermore, we plan to experiment with (a) other choices of inter-tumor phylogenetic methods and (b)
fitting approaches that allow higher order interactions but will also account for the increased 
complexity of the resulting model. 
Finally, using features from our inferred trees as features for classification is  an interesting question, see \cite{ismb13}.
%Finally, can we combine our method with \cite{ismb13}? 
%By learning a probability distribution we have implicitly assigned weights to the edges of 
%the two-dimensional positive integer grid. Can we combine the inference of Steiner nodes in \cite{ismb13} 
%with we can 

%\bibliographystyle{abbrv}
%\bibliography{sigproc} 

\begin{thebibliography}{99}

%%%%% A


%%%%% B

 


 

\bibitem{beer4} 
Beerenwinkel, N., Eriksson, N., Strumfels, B.:
{\em Conjunctive bayesian networks.}
Bernoulli, Vol. 13, pp. 893–909 (2007)

\bibitem{beer5} 
Beerenwinkel, N., Rahnenf\"{u}hrer, J., D\"{a}umer, M., Hoffmann, D., Kaiser, R., Selbig, J., Lengauer, T.:
{\em Learning multiple evolutionary pathways from cross-sectional data.} 
Journal of Computational Biology, Vol. 12, pp. 584–598 (2005) 

\bibitem{beer6} 
Beerenwinkel, N., Sullivant, S.:
{\em Markov models for accumulating mutations.} 
Biometrika, Vol. 96, pp. 663–676 (2009)
 

\bibitem{bishop}
Bishop, Fienberg, S., Holland, P.:
{\em Discrete Multivariate Analysis.}
MIT Press (1975)
 
  
 

%%%%% C
 
\bibitem{ismb13}
Chowdhury S.A. et al.:
{\em FISHtrees: Modeling tumor progression from
 fluorescence in situ hybridization (FISH) data from many single cells of
solid tumors and their metastases.}
ISMB 2013
 



%%%%% D
\bibitem{delaportas} 
Dellaportas, P., Forster, J.:
{\em Markov Chain Monte Carlo Model Determination for Hierarchical and Graphical Log-linear Models.}
Biometrica (1996)
 

\bibitem{Derksen06}
Derksen, PW., Liu, X., Saridin, F. et al.:
{\em Somatic inactivation of E-cadherin and p53 in mice leads to metastatic lobular mammary carcinoma through induction of anoikis resistance and angiogenesis},
Cancer Cell, Vo.10(5), pp.437-449 (2006)


\bibitem{desper} 
Desper, R., et al.:
%Desper, R., Jiang, F., Kallioniemi, O.P., Moch, H., Papadimitriou, C.H., Sch\"{a}ffer, A.A.:
{\em Inferring tree models for oncogenesis from comparative genome hybridization data.}
Journal of Computational Biology, Vol. 6(1), pp. 37-51 (1999)

\bibitem{desper2}
Desper, R., et al.:
%Desper, R., Jiang, F., Kallioniemi, O.P., Moch, H., Papadimitriou, C.H., Sch\"{a}ffer, A.A.:
{\em Distance-based reconstruction of tree models for oncogenesis.}
J. Comput. Biol., Vol. 7(6), pp. 789-803 (2000)

 
%%%%% E
 

%%%%% F

 
%%%%% G
\bibitem {GascoShamiCrook}
Gasco, M., Shami, S., Crook, T.:
{\em The p53 pathway in breast cancer}
Breast Cancer Res., Vol.54 (2), pp.70-76 (2002)


\bibitem{beer3} 
Gerstung, M., Baudis,  M., Moch, H., Beerenwinkel, N.:
{\em Quantifying cancer progression with conjunctive bayesian networks.}
Bioinformatics, Vol. 25, pp. 2809–2815 (2009)

 
%%%%% H

\bibitem{hainke}
Hainke, K., Rahnenf\"{u}hrer, J., Fried, R.:
{\em Disease progression models: A review and comparison.}
Dortmund University, Technical Report, Available at \url{http://tinyurl.com/ceyr9wx}
%\url{https://www.statistik.uni-dortmund.de/fileadmin/user_upload/Lehrstuehle/Biowissenschaften/Bilder/TR_Hainke.pdf}

 

\bibitem{heselmeyer}
Heselmeyer-Haddad, K., et. al.: 
{\em Single-cell genetic analysis of ductal carcinoma in situ and invasive breast cancer reveals enormous tumor heterogeneity yet conserved genomic imbalances and gain of MYC during progression.}
Journal of American Patholology, 181(5), pp. 1807-1822 (2012)


 \bibitem{heydebreck}
Heydebreck, A., Gunawan, B., F\"{u}zesi, L.:
{\em Maximum likelihood estimation of oncogenetic tree models.}
Biostatistics, Vol. 5(4), pp. 545–556 (2004)

 
%%%%% I



%%%%% J
 
 
%%%%% K

 
\bibitem{Krig10}
Krig, SR., Miller, JK., Frietze, S., et al.:
{\em ZNF217, a candidate breast cancer oncogene amplified at 20q13, regulates expression of the ErbB3 receptor tyrosine kinase in breast cancer cells.},
Oncogene, Vol.29(40), pp.5500-5510 (2010)

 

%%%%% L

 

\bibitem{letouze} 
Letouz\'{e}, E., Allory, E., Bollet, M. et al.:
{\em Analysis of the copy number profiles of several
tumor samples from the same patient reveals the
successive steps in tumorigenesis.}
Genome Biology Vol. 11:R76 (2010)


\bibitem{peresbook}
Levin, D., Peres, Y., Wilmer, E.: 
{\em Markov Chains and Mixing Times.}
American Mathematical Society (2008)

\bibitem{SY}
Lin, SY., Xia, W., Wang, JC. et al: 
{\em $\beta$-catenin, a novel prognostic marker for breast cancer: its roles in cyclin D1 expression and cancer progression.}
Proc Natl Acad Sci.,Vol.97(8) pp.4262-4266 (2000)
%%%%% M

 

\bibitem{martins2012evolutionary}
Martins, F. et al.:
{\em Evolutionary pathways in BRCA1-associated breast tumors},
Cancer discovery, Vol. 2(6), pp. 503-511 (2012)
  

 \bibitem{marusyk}
Marusyk, A. and Polyak, K.: 
{\em Tumor heterogeneity: causes and consequences.}
 Biochimica et Biophysica Acta (BBA)-Reviews on Cancer, 1805(1), pp. 105-117 (2010) 





%%%%% N

 
\bibitem{navin2} 
Navin, N., Krasnitz, A., Rodgers, L. et al.:
{\em Inferring tumor progression from genomic heterogeneity.}
Genome Res., Vol. 20, pp. 68-80  (2010)

\bibitem{navin3} 
Navin, N. et al.:
{\em Tumour evolution inferred by single-cell sequencing.}
Nature, Vol. 472, pp. 90-94 (2011)

\bibitem{nowell}
Nowell, P.C.:
{\em The clonal evolution of tumor cell populations.}
Science, Vol. 194, pp. 23-28 (1976)

%%%%% O


%%%%% P
 
 

\bibitem{pennington} 
Pennington, G., Smith, C., Shackney, S., Schwartz, R.:
{\em Reconstructing tumor phylogenies from heterogeneous single-cell data.} 
Journal of Bioinformatics and Computational Biology, Vol. 5(02A), pp. 407-427   (2007)


\bibitem{pennington2} 
Pennington, G., Shackney, S., Schwartz, R.:
{\em Cancer Phylogenetics from Single-Cell Assays.} 
Technical Report CMU-CS-06-103, available at \url{http://tinyurl.com/bvjlgch}
%\url{http://reports-archive.adm.cs.cmu.edu/anon/2006/CMU-CS-06-103.pdf}
 
%%%%% Q

%%%%% R


%%%%% S



\bibitem{thesischmidt} 
Schmidt, M.:
{\em Graphical Model Structure Learning
with $l_1$-Regularization.}
Ph.D. Thesis, University of British Columbia (2010) 


 

\bibitem{schmidt2}
Schmidt, M.,  Murphy, K.:
{\em Convex Structure Learning in Log-Linear Models: Beyond Pairwise Potentials.}
AISTATS (2010)

  



 
 


%%%%% T


%%%%% U

%%%%% V


%%%%% W

%%%%% X

%%%%% Y

%%%%% Z


\end{thebibliography}

\section{Acknowledgments} 

The author would like to thank Prof. Russell  Schwartz and 
Maria Tsiarli for helpful discussions on intratumor heterogeneity
and the anonymous reviewers for their constructive feedback.

\end{document}